\newtheorem{theorem}{Theorem}
\begin{document}
\title{The Fidelity and Trace Norm Distances for Quantifying Coherence}

\author{Lian-He Shao}
\email{snnulhs@gmail.com}
\affiliation{College of Computer Science, Shaanxi Normal University, Xi'an, 710062,
P. R. China}
\affiliation{Beijing National Laboratory for Condensed Matter Physics,
Institute of Physics, Chinese Academy of Sciences, Beijing, 100190, P. R. China}

\author{Zhengjun Xi}
\email{xizhengjun@snnu.edu.cn}
\affiliation{College of Computer Science, Shaanxi Normal University, Xi'an, 710062,
P. R. China}

\author{Heng Fan}
\email{hfan@iphy.ac.cn}
\affiliation{Beijing National Laboratory for Condensed Matter Physics,
Institute of Physics, Chinese Academy of Sciences, Beijing, 100190, P. R. China}
\affiliation{Collaborative Innovation Center of Quantum Matter, Beijing, P. R. China}

\author{Yongming Li}
\email{liyongm@snnu.edu.cn}
\affiliation{College of Computer Science, Shaanxi Normal University, Xi'an, 710062,
P. R. China}

\begin{abstract}

We investigate the coherence measures induced by fidelity and trace norm,
based on the recent proposed coherence quantification in [Phys. Rev. Lett. 113, 140401, 2014].
We show that the fidelity of coherence does not in general satisfy the monotonicity requirement
as a measure of coherence under the subselection of measurements condition.
We find that the trace norm of coherence can act as a measure of coherence for qubit case
and some special class of qutrits. 
\end{abstract}
\eid{identifier}
\pacs{}
\maketitle

\section{Introduction}
Coherence arising from quantum superposition which plays a central role for 
quantum mechanics. Quantum coherence is an important subject in quantum theory
and quantum information science which is a common necessary condition for 
both entanglement and other types of quantum correlations. 
It has been shown that a good definition of coherence does not only depend on the state of the system $\rho$, 
but also depends on the a fixed basis for the quantum system~\cite{Baumgratz13}.
Up to now, several themes of coherence have been considered such as 
witnessing coherence~\cite{CMLi12}, catalytic coherence~\cite{Aberg13}, the thermodynamics of quantum coherence~\cite{Rosario13}, and the role of coherence in biological system~\cite{Huelga13}. 
There seems no well-accepted efficient method for quantifying coherence until recently. 
Girolami proposed a measure of quantum coherence based on the Wigner-Yanase-Dyson skew information~\cite{Girolami14}. It is not only in theoretical but also an experimental scheme implementable with current technology. Baumgratz $et$ $al$. introduced a rigorous framework for quantification of coherence and proposed several measures of coherence, which are based on the well-behaved metrics including the $l_p$-norm, relative entropy, trace norm and fidelity~\cite{Baumgratz13}.
The quantification of coherence promoted in a unified 
and rigorous framework thus stimulated a lot of further considerations about quantum coherence~\cite{ZJXI14,Marvian14,Monras13,Rivas14}.

From the view point of the definition, one can straightforwardly quantify the coherence in a given basis by measuring the distance between the quantum state $\rho$ and its nearest incoherent state. 
This property is similar as that of the well studied measures of the quantum correlation, e.g., entanglement and quantum discord~\cite{Vedral97,Modi10,Vedral01}. We remark that the coherence measures are to be applied to one quantum system
but quantum correlation measures naturally involve more than two parties.
We know that several basic criteria are proposed which should 
be satisfied by any measure of the entanglement \cite{Vedral97,Vedral01}. 
In comparison, the coherence measures also need to satisfy the following four necessary criteria as presented in Ref.\cite{Baumgratz13}.
Given a finite-dimensional Hilbert space $\mathcal{H}$ with $d=dim(\mathcal{H})$. We note that $\mathcal{I}$ is the set of quantum states which is called incoherent state that are diagonal in a fixed basis $\{|i\rangle\}_{i=1}^d$. Then any proper measure of the coherence $C$ must satisfy the following conditions:

(C1) $C(\delta)=0$ for all $\delta\in \mathcal{I}$.

(C2a) Monotonicity under all the incoherent completely positive and trace preserving (ICPTP) maps $\Phi$: $C(\rho)\geq C(\Phi(\rho)) $.

(C2b) Monotonicity for average coherence under subselection based on measurements outcomes: $C(\rho)\geq \sum_n p_n C(\rho_n) $ for all $\{K_n\}$ with $\sum_n K_n^{\dagger}K_n= \mathbb{I}$ and $K_n \mathcal{I} K_n^\dagger\subset \mathcal{I}$.

(C3) Non-increasing under mixing of quantum states: $\sum_n p_n C(\rho_n)\geq C(\sum_n p_n \rho_n )$ for any set of states $\{\rho_n\}$ and any
$p_n\geq0$ with $\sum_n p_n=1$.

As shown in \cite{Baumgratz13}, the condition (C2b) is important as it allows for sub-selection
based on measurement outcomes, a process available in well controlled quantum experiments. It has been shown that the quantum relative entropy and $l_1$-norm satisfy this condition. The squared Hilbert-Schmidt norm does not satisfy (C2b). 
However, it is still an open question whether some other coherence measures satisfy (C2b). 
In this paper, we will show that the measure of coherence induced by fidelity defined distance 
does not satisfy condition (C2b). Explicit example is presented. 
We will also show that trace norm of coherence for qubit satisfies condition (C2b),
the case of qutrit, which is in three-dimensional Hilbert space, is in general unknown, but for some special qutrits, trace norm of coherence
satisfies this condition.

This paper is organized as follows. In Sec.~\ref{sec:FOC}, 
we illustrate that the fidelity of coherence is not a good measure for quantum coherence by presenting 
an example that condition (C2b) is not satisfied. 
In Sec.~\ref{sec:TNC}, we show that condition (C2b) can be satisfied in qubit case and
some special qutrits for trace norm of coherence. We summarize our results in Sec.~\ref{sec:conclusion}.

\section{Fidelity of Coherence}\label{sec:FOC}
As a measure of distance, the fidelity~\cite{Jozsa94} $ F(\rho,\delta)=[tr\sqrt{\rho^{\frac{1}{2}}\delta\rho^{\frac{1}{2}}}]^2 $ is non-decreasing under CPTP maps $\varepsilon$, e.g., $F(\varepsilon(\rho),\varepsilon(\delta))\geq F(\rho,\delta) $. 
Then we know that the fidelity induced distance $1-\sqrt{F(\rho,\delta)}$ is monotonicity under ICPTP maps, and $F(\rho,\delta)=1$ iff $\rho=\delta$. Hence, the fidelity of coherence can be defined as:
\begin{equation}\label{Eq:rho coh}
C_F(\rho)=\min_{\delta\in\mathcal{I}}D(\rho,\delta)=1-\sqrt{\max_{\delta\in\mathcal{I}}F(\rho,\delta)}
\end{equation}
It is easy to find that the fidelity of coherence fulfils (C1),(C2a) and (C3)~\cite{Baumgratz13}.

For the condition (C2b), without loss of generality, we consider the one-qubit system. It is known that for a qubit, the fidelity has a simple form. From the Bloch sphere representation of a quantum state, $\rho$ and $\delta$ can be expressed as \cite{Nielsen},
\begin{equation}\label{Eq:rho del}
\rho=\frac{\mathbb{I}+\mathbf{r}\cdot\mathbf{\sigma}}{2},\delta=\frac{\mathbb{I}+\mathbf{s}\cdot\mathbf{\sigma}}{2}
\end{equation}
where $\mathbb{I}$ is the identity operator, $\mathbf{r}=(r_x,r_y,r_z)$ and $\mathbf{s}=(s_x,s_y,s_z)$ are the Bloch vectors and $\mathbf{\sigma}=(\sigma_x,\sigma_y,\sigma_z)$ is a vector of Pauli matrices. Then the fidelity for qubits has an elegant form, 
\begin{equation}\label{Eq:FIDE}
F(\rho,\delta)=\frac{1}{2}\left[1+\mathbf{r}\cdot\mathbf{s}+\sqrt{(1-|\mathbf{r}|^2)(1-|\mathbf{s}|^2)}\right],
\end{equation}
where $\mathbf{r}\cdot\mathbf{s}$ is the inner product of $\mathbf{r}$ and $\mathbf{s}$, $|\mathbf{r}|$ and $|\mathbf{s}|$ is the magnitude of $\mathbf{r}$ and $\mathbf{s}$, respectively.

Because $\delta$ is the incoherent state, then the Bloch vector $\mathbf{s}$ can be expressed as $\mathbf{s}=(0,0,s_z)$, the Eq.~\eqref{Eq:FIDE} can be replaced as,
\begin{equation}\label{Eq:REFIDE}
F(\rho,\delta)=\frac{1}{2}\left[1+r_zs_z+\sqrt{(1-r_x^2-r_y^2-r_z^2)(1-s_z^2)}\right].
\end{equation}
In order to obtain $\underset{\delta\in\mathcal{I}}\max F(\rho,\delta)$, we should take derivative with respect to $s_z$, then we have,
\begin{equation}\label{Eq:DFIDE}
\frac{dF(\rho,\delta)}{ds_z}=\frac{1}{2}\left[r_z-\sqrt{(1-r_x^2-r_y^2-r_z^2)}\frac{s_z}{\sqrt{1-s_z^2}}\right].
\end{equation}
After some simple algebraic operation, we can obtain,
\begin{equation}\label{Eq:MAXFIDE}
\max_{\delta\in\mathcal{I}}F(\rho,\delta)=\frac{1}{2}\left[1+\sqrt{(1-r_x^2-r_y^2)}\right].
\end{equation}
Therefore, we obtain
\begin{eqnarray}
C_F(\rho)&=&1-\sqrt{\max_{\delta\in\mathcal{I}}F(\rho,\delta)} \nonumber \\
&=&1-\frac{\sqrt{2}}{2}\sqrt{1+\sqrt{(1-r_x^2-r_y^2)}}. \label{Eq:COFIDE}
\end{eqnarray}
This implies that the state $\rho_{\mathrm{diag}}$ is not necessarily optimized for the fidelity of coherence in the one-qubit system. Thus, in general, we have \begin{equation}
\min_{\delta\in\mathcal{I}}(1-\sqrt{F(\rho,\delta)})\neq 1-\sqrt{F(\rho,\rho_{\mathrm{diag}})}.
\end{equation}
This makes that the sub-selection process becomes hard to verify. We should choose peculiar incoherent operations to simplify calculation.

Now we give an example to show that the condition (C2b) is violated. As we know that the depolarizing, the phase-damping, and the amplitude-damping channels are the qubit incoherent operatorations. We choose the amplitude-damping-like operation as incoherent operations, its operation elements are expressed as,
\begin{equation}K_1=\left(
  \begin{array}{lcr}
 a & 0 \\
 0 & b
  \end{array}
\right), K_2=\left(
  \begin{array}{lcr}
 0 & c \\
 0 & 0
  \end{array}
  \right).
\end{equation} After applying it on the one-qubit, we obtained the output state
\begin{eqnarray}\label{Eq:KR}
\rho_1
 =\left( \begin{array}{lcr}
\frac{|a|^2(1+r_z)}{|a|^2(1+r_z)+|b|^2(1-r_z)} & \frac{ab^*(r_x-ir_y)}{|a|^2(1+r_z)+|b|^2(1-r_z)} \\
  \frac{a^*b(r_x+ir_y)}{|a|^2(1+r_z)+|b|^2(1-r_z)} & \frac{|b|^2(1-r_z)}{|a|^2(1+r_z)+|b|^2(1-r_z)}
  \end{array}
\right),
\end{eqnarray}
with the probability
\begin{equation}\label{Eq:PRO}
p_1=tr(K_1\rho K_1^\dagger)=\frac{1}{2}\left(|a|^2(1+r_z)+|b|^2(1-r_z)\right).
\end{equation}

In order to obtain the quantity, $C_F(\rho_1)$, 
we should transform $\rho_1$ to the Bloch representation. Then the Bloch vector for $\rho_1$ can be given by,
\begin{equation}
\begin{cases}\label{Eq:BLOCH}
    x=\frac{ab^*(r_x-ir_y)+a^*b(r_x+ir_y)}{|a|^2(1+r_z)+|b|^2(1-r_z)}\\
    y=\frac{-i[a^*b(r_x+ir_y)-ab^*(r_x-ir_y)]}{|a|^2(1+r_z)+|b|^2(1-r_z)}\\
    z=\frac{|a|^2(1+r_z)-|b|^2(1-r_z)}{|a|^2(1+r_z)+|b|^2(1-r_z)}
  \end{cases}
\end{equation}
The fidelity of coherence for $\rho_1$ is obtained by substituting $x$ and $y$ as,
\begin{eqnarray}
&&C_F(\rho_1)=1-\frac{\sqrt{2}}{2}\sqrt{1+\sqrt{(1-x^2-y^2)}} \nonumber \\
&&={\small 1-\frac{\sqrt{2}}{2}\sqrt{1+\sqrt{1-\frac{4|a|^2|b|^2(r_x^2+r_y^2)}{[|a|^2(1+r_z)+|b|^2(1-r_z)]^2}}}} \nonumber \\
\label{Eq:CF}
\end{eqnarray}
Because that $K_i$ should satisfy $\sum_n K_n^{\dagger}K_n= \mathbb{I}$, then we have $|a|^2=1,|b|^2+|c|^2=1$. Let $|b|^2=\frac{1}{4}, |c|^2=\frac{3}{4}, r_x^2+r_y^2=\frac{1}{2}, r_z^2=\frac{1}{2}$, substitute those values into Eq.~\eqref{Eq:PRO}, Eq.~\eqref{Eq:CF} and Eq.~\eqref{Eq:COFIDE}, then we have,
\begin{eqnarray}\label{Eq:PCFIDE}
p_1C(\rho_1)&=&\frac{10-3\sqrt{2}}{16}[1-\frac{\sqrt{2}}{2}\sqrt{1+\sqrt{1-\frac{32}{(10-3\sqrt{2})^2}}}] \nonumber \\
& \approx &0.08273
\end{eqnarray}
and
\begin{eqnarray}
C_F(\rho)=1-\sqrt{\frac{1}{2}[1+\frac{\sqrt{2}}{2}]}\approx0.07612.
\end{eqnarray}
Note that the operation $K_2$ makes $C_F(\rho_2)=0$. Thus, we obtain \begin{equation}
\sum_{i=1}^2 p_iC_F(\rho_i)=p_1C_F(\rho_1)>C_F(\rho).
\end{equation}

From the above example, we then conclude that the condition (C2b): $C_F(\rho)\geq\sum_n p_n C_F(\rho_n)$ is 
not generally true for ICTPT maps for measure of coherence induced
by fidelity. If the Bloch vector $\mathbf{r}=(r_x,r_y,r_z)$ satisfies $r_x^2+r_y^2+r_z^2\leq1$, and suppose $-\frac{\sqrt{2}}{2}\leq r_z\leq \frac{\sqrt{2}}{2}$, then we can find many examples to illustrate that the fidelity of coherence does not satisfy condition (C2b), 
as shown in Fig.~\ref{Fig_1}.
\begin{figure}\label{fig1}
  \includegraphics[scale=0.3]{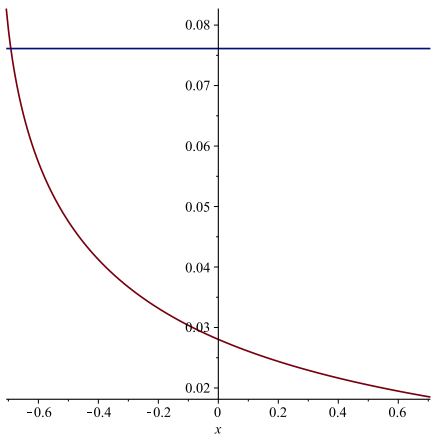}
\caption{ The Blue line shows $C_F(\rho)$, the red line shows $p_1C_F(\rho_1)$.
The x-axis expresses the values of $r_z$ ($-\frac{\sqrt{2}}{2} \leq r_z\leq \frac{\sqrt{2}}{2}$). The intersecting coordinate is $(-0.691964,0.076120)$. Because of $-0.691964>-\frac{\sqrt{2}}{2}$, so when $-0.691964\geq r_z\geq -\frac{\sqrt{2}}{2}$, we always have $p_1C(\rho_1)+p_2C(\rho_2)\geq C(\rho)$}.
\label{Fig_1}
\end{figure}

\section{Trace norm of Coherence}\label{sec:TNC}

For the trace norm of coherence, we will list some basic calculations, so that they can help us to judge whether the trace norm of coherence satisfies the condition (C2b).

At first, we have already considered the one-qubit states for the trace norm of coherence. Given two one-qubit states $\rho=\frac{\mathbb{I}+\mathbf{r}\cdot\mathbf{\sigma}}{2}$ and $\delta=\frac{\mathbb{I}+\mathbf{s}\cdot\mathbf{\sigma}}{2}$. The trace norm between $\rho$ and $\delta$ can be expressed as
\begin{equation}
D_{tr}(\rho,\delta)=|\mathbf{r}-\mathbf{s}|
\end{equation}
Then the trace norm of coherence can be easily expressed as:
\begin{eqnarray}
C_{tr}(\rho)&=&\min_{\delta \in \mathcal{I}} D_{Tr}(\rho,\delta) \nonumber \\
&=&\min_{\delta\in\mathcal{I}}\sqrt{(r_x-s_x)^2+(r_y-s_y)^2+(r_z-s_z)^2} \nonumber \\
\end{eqnarray}
For the incoherent states, we know that $s_x=s_y=0$, the trace norm of coherence can be simplified as
\begin{eqnarray}
C_{tr}(\rho)&=&\min_{\delta\in\mathcal{I}}\sqrt{r_x^2+r_y^2+(r_z-s_z)^2} \nonumber \\
&=&\| \rho-\rho_{\mathrm{diag}}\|_{tr}=\sqrt{r_x^2+r_y^2}
\end{eqnarray}
Note that $C_{tr}(\rho)$ has the same form of expression with the $l_1$ norm of coherence $C_{l_1}(\rho)=\sum_{i,j,i\neq j}|\rho_{i,j}|$ for the one-qubit case. So in this situation, $C_{tr}(\rho)$ satisfies the condition (C2b).
Here we simply conclude that trace norm can act as a coherence measure for a qubit. 

For the one-qutrit quantum system, the eigenvalues of the qutrit density matrices have complex expressions. 
It seems difficult to estimate the optimal incoherent state. Fortunately, we can find some special density matrices 
whose optimal incoherent states can be obtained.
\begin{theorem}
For the following three classes of qutrit states
\begin{equation}\rho_X=\left(
  \begin{array}{lcr}
 a_{11} & 0 & a_{13} \\
 0 & a_{2,2} & 0 \\
 a_{13}^* & 0 & a_{33}
  \end{array}
\right),
\end{equation}
\begin{equation}
\rho_Y=\left(
  \begin{array}{lcr}
 a_{11} & a_{12} & 0 \\
 a_{12}^* & a_{22} & 0 \\
 0 & 0 & a_{33}
  \end{array}
\right),
\end{equation}
and
\begin{equation}
\rho_Z=\left(
  \begin{array}{lcr}
 a_{11} & 0 & 0 \\
 0 & a_{22} & a_{23} \\
 0 & a_{23}^* & a_{33}
  \end{array}
\right),
\end{equation}
the optimal incoherent state of the trace norm of coherence is of the form $\rho_{\mathrm{diag}}$.
\end{theorem}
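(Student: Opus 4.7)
The plan is to exploit the block structure of $\rho_X-\delta$ when $\delta=\mathrm{diag}(s_1,s_2,s_3)$ is incoherent. Since the only off-diagonal entries of $\rho_X-\delta$ sit at positions $(1,3)$ and $(3,1)$, after simultaneously permuting rows and columns the matrix decomposes into a $2\times 2$ Hermitian block on the index set $\{1,3\}$ and a $1\times 1$ block at index $2$. The trace norm is additive under such orthogonal block decompositions, so
\[
\|\rho_X-\delta\|_{tr}=\left\|\begin{pmatrix} u & a_{13}\\ a_{13}^{\ast} & v\end{pmatrix}\right\|_{tr}+|a_{22}-s_2|,
\]
where $u=a_{11}-s_1$ and $v=a_{33}-s_3$. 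The eigenvalues of the $2\times 2$ block are $\lambda_{\pm}=\tfrac{u+v}{2}\pm R$ with $R=\sqrt{(u-v)^2/4+|a_{13}|^2}$, so its trace norm equals $|\lambda_+|+|\lambda_-|$.

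The next step uses the algebraic observation that the constraints $\mathrm{tr}\,\rho_X=\mathrm{tr}\,\delta=1$ force $u+v=s_2-a_{22}$, and hence $|a_{22}-s_2|=|u+v|$. I would then case-split on the sign of the determinant $uv-|a_{13}|^2$ of the $2\times 2$ block. When $uv\le|a_{13}|^2$ the two eigenvalues have opposite signs (or one vanishes), so $|\lambda_+|+|\lambda_-|=2R\ge 2|a_{13}|$ and $\|\rho_X-\delta\|_{tr}\ge 2|a_{13}|$ already. When $uv>|a_{13}|^2$ both eigenvalues share a sign, giving $|\lambda_+|+|\lambda_-|=|u+v|$; combined with $|a_{22}-s_2|=|u+v|$ the total equals $2|u+v|$, and since $u,v$ then carry the same sign, $|u+v|=|u|+|v|\ge 2\sqrt{uv}>2|a_{13}|$ by AM--GM, so $\|\rho_X-\delta\|_{tr}>4|a_{13}|$ in this regime.

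Combining the two cases yields $\|\rho_X-\delta\|_{tr}\ge 2|a_{13}|$, with equality iff $s_2=a_{22}$ and $u=v=0$, i.e., $\delta=\rho_{\mathrm{diag}}$. The statements for $\rho_Y$ and $\rho_Z$ follow by a relabeling of indices: in $\rho_Y$ the coherent block is $\{1,2\}$ with isolated index $3$, and in $\rho_Z$ it is $\{2,3\}$ with isolated index $1$. In each case one obtains the same $2\oplus 1$ decomposition and the analogous identity $u+v=s_k-a_{kk}$ for the isolated index $k$, so the two-case analysis goes through verbatim and pins the minimizer to $\rho_{\mathrm{diag}}$.

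The step I expect to be the main obstacle is spotting the algebraic coupling $u+v=s_2-a_{22}$ that links the $2\times 2$ block's trace norm to the isolated diagonal entry. Without this identity the two summands look independent, the same-sign case only yields the trivial bound $|u+v|\ge 0$, and one cannot exclude a strictly better incoherent state than $\rho_{\mathrm{diag}}$. Once the identity is in hand, the case analysis is short and the bound $2|a_{13}|$ is tight. A slicker alternative for the lower bound alone would be trace-norm/operator-norm duality applied to the sign pattern $X_0$ supported on $(1,3),(3,1)$ with $\|X_0\|_\infty=1$, giving $\|\rho_X-\delta\|_{tr}\ge|\mathrm{tr}(X_0(\rho_X-\delta))|=2|a_{13}|$ in one line, but this route loses track of when equality is attained.
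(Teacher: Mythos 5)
Your proof is correct and follows essentially the same route as the paper: both reduce $\|\rho_X-\delta\|_{tr}$ to the three eigenvalues of the $2\oplus 1$ block matrix $\rho_X-\delta$ and run a sign-based case analysis to bound the sum of their moduli below by $2|a_{13}|$, and the key identity $u+v=s_2-a_{22}$ you isolate is already baked into the paper's eigenvalue formulas, where $\lambda_2+\lambda_3=y-a_{22}=-\lambda_1$. Your two-case split on the sign of the block determinant plus AM--GM is a tidier packaging of the paper's four sign cases, and the equality characterization and duality remark are useful extras, but the underlying argument is the same.
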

\begin{proof} We only prove the case of state $\rho_X$, the states $\rho_Y$ and $\rho_Z$ are completely analogous.
Since all qutrit incoherent states have the form as
\begin{equation}
\delta=\left(
  \begin{array}{lcr}
 x & 0 & 0 \\
 0 & y & 0 \\
 0 & 0 & z
  \end{array}
\right),
\end{equation}
then we can easily obtain the eigenvalues for $\rho_X-\delta$, 
\begin{equation}
\begin{cases}\label{Eq:eigen}
    \lambda_1=a_{22}-y,\\
    \lambda_2=\frac{y-a_{22}}{2}-\frac{\sqrt{(2x+y-2a_{11}-a_{22})^2+4|a_{13}|^2}}{2},\\
    \lambda_3=\frac{y-a_{22}}{2}+\frac{\sqrt{(2x+y-2a_{11}-a_{22})^2+4|a_{13}|^2}}{2}.
  \end{cases}
\end{equation}
We know that $\rho_X-\delta$ is a normal matrix, its singular values are the modulus of the eigenvalues for $\rho_X-\delta$, then we have,
\begin{equation}\label{Eq:lambda}
||\rho_X-\delta||_{tr}=|\lambda_1|+|\lambda_2|+|\lambda_3|
\end{equation}
In order to minimize $||\rho_X-\delta||_{tr}$ over all the incoherent states, we should consider four cases as following.

Case 1. When $\frac{y-a_{22}}{2}\geq\frac{\sqrt{(2x+y-2a_{11}-a_{22})^2+4|a_{13}|^2}}{2}$ and $a_{22}\leq y$, we can simplify Eq.~\eqref{Eq:lambda} as
\begin{eqnarray}\label{Eq:case1}
||\rho_X-\delta||_{tr} & = & 2y-2a_{22} \nonumber \\
& \geq & 2\sqrt{(2x+y-2a_{11}-a_{22})^2+4|a_{13}|^2} \nonumber \\
& \geq & 2 \sqrt{|a_{13}|^2}\nonumber\\
&=&||\rho_X-\rho_{\mathrm{diag}}||_{tr}.
\end{eqnarray}

Case 2. When $\frac{y-a_{22}}{2}\leq\frac{\sqrt{(2x+y-2a_{11}-a_{22})^2+4|a_{13}|^2}}{2}$ and $a_{22}\leq y$, similar to case 1, we have
\begin{eqnarray}\label{Eq:case2}
& &||\rho_X-\delta||_{tr} \nonumber\\
 &= &y-a_{22}+\sqrt{(2x+y-2a_{11}-a_{22})^2+4|a_{13}|^2} \nonumber \\
& \geq & 2 \sqrt{|a_{13}|^2}\nonumber\\
&=&||\rho_X-\rho_{\mathrm{diag}}||_{tr}.
\end{eqnarray}

Case 3. When $\frac{y-a_{22}}{2}\leq\frac{\sqrt{(2x+y-2a_{11}-a_{22})^2+4|a_{13}|^2}}{2}$, $y\leq a_{22}$ and $\frac{y-a_{22}}{2}+\frac{\sqrt{(2x+y-2a_{11}-a_{22})^2+4|a_{13}|^2}}{2}\geq 0$, we have
\begin{eqnarray}\label{Eq:case3}
& &||\rho_X-\delta||_{tr} \nonumber\\
& = & a_{22}-y+\sqrt{(2x+y-2a_{11}-a_{22})^2+4|a_{13}|^2} \nonumber \\
&\geq & 2\sqrt{|a_{13}|^2}\nonumber\\
&=&||\rho_X-\rho_{\mathrm{diag}}||_{tr}.
\end{eqnarray}

Case 4. When $\frac{y-a_{22}}{2}\leq\frac{\sqrt{(2x+y-2a_{11}-a_{22})^2+4|a_{13}|^2}}{2}$,  $y\leq a_{22}$ and $\frac{y-a_{22}}{2}+\frac{\sqrt{(2x+y-2a_{11}-a_{22})^2+4|a_{13}|^2}}{2}\leq 0$, we have
\begin{eqnarray}\label{Eq:case4}
& &||\rho_X-\delta||_{tr} \nonumber\\
& \geq & 2\sqrt{(2x+y-2a_{11}-a_{22})^2+4|a_{13}|^2} \nonumber \\
& \geq & 2\sqrt{|a_{13}|^2}\nonumber\\
&=&||\rho_X-\rho_{\mathrm{diag}}||_{tr}.
\end{eqnarray}
Through the above analysis, we can obtain that the trace norm of coherence for $\rho_X$ has the optimal incoherent state $\rho_{\mathrm{diag}}$.
\end{proof}

According to the above theorem, we can also obtain an analytical expression of the trace norm of coherence for $\rho_X$ as,
\begin{equation}
C_{tr}(\rho_X)=D_{tr}(\rho_X,\rho_{diag})=2|a_{13}|.
\end{equation}
Note that $C_{tr}(\rho_X)$ also has the same form of expression with the $l_1$ norm of coherence $C_{l_1}(\rho)=\sum_{i,j,i\neq j}|\rho_{i,j}|$ for the $\rho_X$. Based on this fact and as shown in Ref.\cite{Baumgratz13}, we know that $C_{tr}(\rho_X)$ satisfies the condition (C2b). 
Similarly, we can verify the trace norm of coherence for $\rho_Y$ and $\rho_Z$ satisfies the condition (C2b).

\section{conclusion}\label{sec:conclusion}
In this paper, we show that the fidelity of coherence does not satisfy condition (C2b) by presenting an example. 
We then conclude that the measure of coherence induced by fidelity is not a good measure for quantifying coherence. 
For the trace norm of coherence, we have shown that the qubit states and some special qutrit states 
can satisfy condition (C2b). Our results show that the trace norm of coherence
is equivalent to $l_1$ norm of coherence for qubits and special qutrits. 
It is unknown whether the coherence measure induced by trace norm can be applied for general quantum states. 
Our findings complement the results of coherence quantification in Ref.~\cite{Baumgratz13}.

\begin{acknowledgments}
Z.J. Xi is supported by NSFC (61303009), and the Higher School Doctoral Subject Foundation of Ministry of Education of China (20130202120002). Y.M. Li is supported by NSFC (11271237). H. Fan is supported by 973
program (2010CB922904), NSFC (11175248).
\end{acknowledgments}

\end{document}